\documentclass[conference]{ieeeconf}
\IEEEoverridecommandlockouts
\usepackage{amsmath,amssymb}
\usepackage{mathtools}
\usepackage{bm}

\usepackage{amsthm} 
\usepackage{subfig}

\usepackage{xcolor}

\newtheorem{definition}{Definition}
\newtheorem{lemma}{Lemma}
\newtheorem{proposition}{Proposition}
\newtheorem{theorem}{Theorem}
\newtheorem{remark}{Remark}

\usepackage{color,soul}

\title{Feedforward Density-Driven Optimal Control for Tracking Time-Varying Distributions with Guaranteed Stability}
\author{Julian Martinez$^{1}$ and Kooktae Lee$^{1}$
\thanks{*This work was supported by NSF CAREER Grant CMMI-DCSD-2145810.}
\thanks{$^{1}$Julian Martinez and Kooktae Lee are with the Department of Mechanical Engineering, New Mexico Institute of Mining and Technology, Socorro, NM 87801, USA, email: julian.martinez@student.nmt.edu, kooktae.lee@nmt.edu.}  
}

\begin{document}

\maketitle
\thispagestyle{empty}
\pagestyle{empty}


\begin{abstract}
This paper addresses the spatiotemporal mismatch in multi-agent distribution tracking within time-varying environments. While recent advancements in Density-Driven Optimal Control (D$^2$OC) have enabled finite-time distribution matching using Optimal Transport theory, existing formulations primarily assume a stationary reference density. In dynamic scenarios, such as tracking evolving wildfires or moving plumes, this assumption leads to a structural tracking lag where the agent configuration inevitably falls behind the shifting reference flow. To resolve this, we propose a feedforward-augmented D$^2$OC framework that explicitly incorporates the reference velocity field, modeled via the continuity equation, into the control law. We provide a formal mathematical quantification of the induced tracking lag and analytically prove that the proposed predictive mechanism effectively reduces the cumulative tracking error. Furthermore, an analytical ultimate bound for the local Wasserstein distance is established under discretization errors and transport jitter. Theoretical analysis and numerical results demonstrate that our approach significantly mitigates tracking latency, ensuring robust and high-fidelity tracking performance in rapidly changing environments.
\end{abstract}

\section{Introduction}

The coordination of multi-agent systems for \textit{dynamic density matching} is a critical task in robotics, where a collective swarm must be distributed to match a prescribed, often spatially complex, target density \cite{bandyopadhyay2014probabilistic}. Unlike traditional uniform distributions, non-uniform matching is essential for high-fidelity applications such as environmental monitoring of localized pollutant plumes \cite{lewis2021comprehensive}, urban surveillance \cite{leong2021intelligent}, and precision agriculture \cite{tokekar2016sensor}. In these scenarios, robotic resources must be allocated disproportionately to high-priority zones to maximize sensing efficiency. The core of this challenge lies in the inherent discrepancy between the empirical distribution formed by discrete agents and the target measure. This requires a rigorous mathematical framework to ensure that the finite swarm configuration accurately captures the evolving spatial characteristics of the desired density in real time.

\textbf{Related Work:}
Various mathematical frameworks have been proposed to address the challenges of non-uniform coverage. Voronoi-based methods utilize Lloyd's algorithm to achieve optimal spatial partitioning \cite{cortes2004coverage, schwager2009decentralized}. However, they are often prone to local minima and lack the inherent flexibility to handle rapidly shifting target densities due to their reliance on static geometric partitions. To overcome these limitations, Spectral Multiscale Coverages based on the ergodic control strategy have been developed, utilizing Fourier coefficients and ergodic metrics to match the time-averaged statistics of agent trajectories with a target distribution \cite{mathew2011metrics,mavrommati2017real,liu2025multi}. While these methods provide strong statistical guarantees, their reliance on infinite-time averages often results in poor transient performance, which is a significant drawback for time-critical operations.

Recent advancements have integrated Optimal Transport (OT) theory \cite{villani2009optimal} into the density matching domain, providing a mathematically rigorous way to minimize the discrepancy between multi-agent configurations and reference distributions. Specifically, the Density-Driven Optimal Control (D$^2$OC) framework has emerged as a powerful Lagrangian approach that directly minimizes the Wasserstein distance in a finite-time horizon \cite{seo2025density, seo2025tcst,lee2025connectivity}. Building upon these foundations, recent work has established a fully decentralized distribution matching scheme \cite{lee2026tac}, enabling a swarm of agents to achieve a desired terminal distribution with theoretical guarantees on optimality and coordination. Despite these breakthroughs in finite-time matching, a critical gap remains: existing D$^2$OC formulations, including the aforementioned decentralized approaches, primarily assume a \textit{stationary reference distribution}.
In dynamic environments, such as tracking an evolving wildfire \cite{bailon2022real} or a moving target plume \cite{wang2019dynamic}, this assumption leads to a structural tracking lag \cite{xu2020multi}. Without accounting for the temporal evolution of the density, the agents' spatial configuration inevitably falls behind the shifting reference flow.

\textbf{Contributions:}
To overcome the inherent tracking latency in non-stationary environments, this paper proposes a feedforward D$^2$OC framework. By integrating the reference velocity field derived from the continuity equation directly into the control law, we introduce a predictive mechanism that proactively compensates for the temporal lag in density matching. The primary contributions of this work are as follows:

\begin{itemize}
    \item[1)]{Quantification of Density Tracking Lag:} We provide a formal mathematical derivation of the tracking lag induced by a time-varying reference density. We analyze how the cumulative error grows when the density evolves according to a prescribed continuity equation.
    \item[2)]{Predictive Feedforward Compensation:} We introduce a novel feedforward control term derived from the reference velocity field. We analytically prove its effectiveness in reducing the cumulative tracking error norm, establishing a theoretical benchmark for lag-compensated tracking.
    \item[3)]{Stability and Robustness Analysis:} We establish an analytical ultimate bound for the local Wasserstein distance under practical constraints, including discretization errors and transport jitter. This ensures the proposed framework maintains stable performance even in rapidly changing and uncertain environments.
\end{itemize}

\section{Preliminaries}

\noindent \textbf{Notation:} 
The sets of real numbers and integers are represented by \(\mathbb{R}\) and \(\mathbb{Z}\), respectively. We denote positive and non-negative integers as \(\mathbb{Z}_{>0}\) and \(\mathbb{Z}_{\geq 0}\). Vectors in \(n\)-dimensional Euclidean space are denoted by \(\mathbb{R}^n\). The Euclidean norm is written as $\|\cdot\|$, while $\|\cdot\|_\infty$ denotes the infinity norm. For a matrix $A$, $A^\top$ indicates its transpose. Identity and zero matrices are denoted by $\mathbf{I}_n$ and $\mathbf{0}$, with subscripts indicating their dimensions. Given a positive definite matrix \(R \succ 0\), the weighted norm is defined as \(\|U\|_R = \sqrt{U^\top R U}\). The operator \(\mathrm{diag}(\cdot)\) forms a diagonal matrix from a vector, and \(\mathrm{blkdiag}(A_h)_{h=k}^{k+H-1}\) performs block-diagonal concatenation over the specified index range.

We consider a multi-agent system where the motion of each agent \(i\) is governed by discrete-time Linear Time-Invariant (LTI) dynamics:
\begin{equation}
x_i(k+1) = A_i x_i(k) + B_i u_i(k), \quad y_i(k) = C_i x_i(k), \label{eq:dyn}
\end{equation}
where \(x_i(k) \in \mathbb{R}^n\), \(u_i(k) \in \mathbb{R}^m\), and \(y_i(k) \in \mathbb{R}^d\) represent the state, control input, and measured output (position) at time step \(k\), respectively. The matrices \(A_i, B_i, C_i\) define the system characteristics. To ensure that the control synthesis and stability analysis remain tractable, we assume that the agent dynamics are controllable and that the state evolution is constrained within a compact operational domain.

\subsection{Optimal Transport and Wasserstein Metric}
The objective of non-uniform density matching is formulated through the lens of Optimal Transport (OT) theory \cite{villani2009optimal}. We quantify the discrepancy between distributions using the \(p\)-Wasserstein distance. For two discrete probability measures \(\rho\) and \(\rho_{\text{ref}}\), this distance is defined as:
\begin{equation}
\mathcal{W}_p(\rho,\rho_{\text{ref}}) = \left( \min_{\pi_{\ell j}} \sum_{\ell=1}^M \sum_{j=1}^N \pi_{\ell j}\, \|y_\ell - q_j\|^p \right)^{1/p}, \label{eq:W-dist}
\end{equation}
subject to $\pi_{\ell j}\geq 0$, $\sum_{j=1}^N \pi_{\ell j} = \alpha_\ell$, $\sum_{\ell=1}^M \pi_{\ell j} = \beta_j$, and $\sum_{\ell,j}\pi_{\ell j} = 1$. The term \(\pi_{\ell j}\) represents the transport plan between the agent positions \(y_\ell\) and reference samples \(q_j\). Throughout this work, we focus on the quadratic case ($p=2$).

In this framework, the \textit{empirical distribution} \(\rho(k)\) is generated by the collective positions of $n_a\in\mathbb{Z}_{>0}$ agents over time, while the \textit{target distribution} \(\rho_{\text{ref}}\) is represented by a set of reference samples $q_j$ that encode the desired spatial importance. For a stationary reference, these distributions are expressed using Dirac measures $\delta$ as:
\begin{equation}
\rho(k) = \frac{1}{k+1}\sum_{t=0}^k \left( \frac{1}{n_a}\sum_{i=1}^{n_a}\delta_{y_i(t)} \right), \quad \rho_{\text{ref}} = \frac{1}{N}\sum_{j=1}^N \delta_{q_j}.
\end{equation}
The fundamental goal of Density-Driven Optimal Control (D$^2$OC) is to minimize $\mathcal{W}_2(\rho(k), \rho_{\text{ref}})$ while adhering to the physical and operational constraints of the swarm.

\subsection{Standard D$^2$OC Protocol and its Limitations}
The standard D$^2$OC framework \cite{seo2025density} addresses the computational intractability of global Wasserstein minimization by decomposing the problem into a decentralized, three-stage iterative cycle:

\begin{enumerate}
    \item \textbf{Local Control Synthesis:} To manage the high-dimensional target density, each agent independently identifies a subset of proximal reference samples. The agent then solves a constrained optimization problem to compute a control input $u_i(k)$ that minimizes the local Wasserstein distance between its future state and these selected samples, while strictly respecting its physical dynamics \eqref{eq:dyn}.
    
     \item \textbf{Recursive Weight Update:} The importance of each target location is represented by a positive scalar weight $\beta_j(k)$. At the initial time step $k=0$, these weights are distributed uniformly as $\beta_j(0) = 1/N$ for all $j \in \{1, \dots, N\}$, representing the total required density. As an agent moves toward or resides near a sample location, it reduces the corresponding weight to account for the agent coverage in that region. This update mechanism allows the swarm to maintain an implicit record of the density matching progress and ensures that agents allocate their efforts toward regions where the target distribution has not yet been fully realized.
    
    \item \textbf{Distributed Consensus:} Coordination among multiple agents is achieved through event-based information exchange. When agents enter each other's communication range, they synchronize their local weight maps by adopting the minimum observed weights for each sample. This consensus process creates a shared understanding of the remaining task across the network, which effectively prevents redundant exploration and fosters collaborative coverage.
\end{enumerate}

While this protocol is effective for stationary distributions, it is inherently limited by the assumption of fixed reference samples. In non-stationary environments where $\rho_{\text{ref}}$ evolves according to an underlying flow field, this reactive approach inevitably suffers from a structural tracking lag. Since the control law lacks a predictive component to account for the reference velocity, agents can only respond to existing discrepancies rather than anticipating the evolving density trajectory. This limitation motivates the feedforward-augmented framework developed in the following sections.

\section{Time-Varying Reference Samples for D$^2$OC}
\label{sec:time_varying_samples}
To extend the D$^2$OC framework to non-stationary scenarios, such as dynamic target monitoring or tracking environmental phenomena, the reference distribution must be treated as a time-evolving entity. We characterize this evolution by considering \emph{time-varying sample points} \(q_j(k) \in \mathbb{R}^d\), which represent the discretized target density at each time step $k$. These samples are assumed to follow discrete-time dynamics:
\begin{equation}
    q_j(k+1) = f_j\big(q_j(k), k\big), \quad j=1,\dots,N,
    \label{eq:sample_dynamics}
\end{equation}
where \(f_j:\mathbb{R}^d\times \mathbb{Z}_{\ge 0} \rightarrow \mathbb{R}^d\) encodes the motion or evolution of the \(j\)-th sample at time step \(k\). 

\subsection{Discrete-Time Continuity Equation Approximation}

To rigorously model the evolution of the target distribution, we treat the reference density \(\rho_{\text{ref}}(x,t)\) as a time-varying spatial distribution. Following the dynamic formulation of optimal transport \cite{benamou2000computational}, this evolution is governed by the continuity equation:
\begin{equation}
    \frac{\partial \rho_{\text{ref}}}{\partial t} + \nabla \cdot (\rho_{\text{ref}} v) = 0,
    \label{eq:continuity_eq}
\end{equation}
where \(v(x,t)\) represents the velocity field underlying the reference flow. This formulation allows us to interpret \(v(x,t)\) as the optimal velocity field that drives the density transport while minimizing the 2-Wasserstein distance over time \cite{benamou2000computational}. 

By discretizing \eqref{eq:continuity_eq} using a forward Euler scheme, we obtain:
\begin{equation}
    \rho_{\text{ref}}(x, k+1) \approx \rho_{\text{ref}}(x, k) - \Delta t \, \nabla \cdot (\rho_{\text{ref}}(x,k) v(x,k)),
    \label{eq:discrete_continuity}
\end{equation}
which naturally induces a Lagrangian motion of the discrete sample points \(q_j(k)\). Integrating the velocity field along the characteristics of the flow yields the update law:
\begin{equation}
    q_j(k+1) = q_j(k) + \Delta t \, v(q_j(k), k), \quad j=1,\dots,N.
    \label{eq:sample_vf}
\end{equation}
This formulation explicitly links the sample evolution to the velocity field \(v\), which may encode environmental flows, moving targets, or other time-varying phenomena. Unlike the stationary protocol in \cite{seo2025density}, this approach provides a rigorous foundation for mitigating tracking lag by incorporating the reference velocity into the control synthesis.

\begin{remark}[Discretization and Total Perturbation]
\label{remark:discretization_error}
The reference evolution in \eqref{eq:sample_vf} follows a first-order forward discretization of the continuity equation. This approximation inherently omits higher-order terms of $\Delta t$, and the discrete re-selection of local samples introduces stochastic jitter. We model these cumulative effects as a bounded total perturbation $\mathcal{H}_i(k)$. This modeling choice ensures that the tracking stability analysis in Section~\ref{sec:robustness} remains valid under non-idealized reference flows.
\end{remark}

\subsection{Impact on D$^2$OC Barycenter Computation}

The transition to time-varying reference samples fundamentally alters the \textit{Local Control Synthesis} stage described in Section II-B. In this stage, each agent $i$ first identifies a subset of proximal reference samples $\mathcal{S}_i(k)$ and then determines a target destination by computing their local barycenter. This barycenter serves as the instantaneous tracking goal that minimizes the local Wasserstein distance. For each agent $i$, the local barycenter at time step $k$ is defined as the weighted average of these samples:
\begin{equation}
    \bar q_i(k) := \frac{\sum_{j \in \mathcal{S}_i(k)} \pi_j(k) q_j(k)}{\sum_{j \in \mathcal{S}_i(k)} \pi_j(k)},
    \label{eq:local_barycenter_time_varying}
\end{equation}
where $\pi_j(k)$ represents the transport weights (importance) assigned to each sample. 

In the original D$^2$OC framework, the reference samples $q_j$ were assumed to be stationary, rendering $\bar q_i$ a static setpoint. However, since $q_j(k)$ now evolve according to the velocity field \eqref{eq:sample_vf}, the resulting barycenter $\bar q_i(k)$ becomes a time-varying trajectory. Relying solely on the previous reactive control law introduces an inherent tracking delay, as the agent only pursues the current position of a moving target. Consequently, agents must anticipate the future motion of $\bar q_i(k)$, which motivates the integration of \emph{feedforward compensation} into the control synthesis to eliminate structural lag and ensure precise tracking of the non-stationary distribution.

The evolution of sample points according to \eqref{eq:sample_vf} directly propagates into the local barycenter \eqref{eq:local_barycenter_time_varying}, transforming what was once a static setpoint into a dynamic trajectory. Consequently, the tracking performance of each agent becomes a function of both the reference flow and the underlying control policy. This non-stationary nature of the target distribution necessitates a predictive control framework capable of anticipating sample motion. In the following section, we reformulate the D$^2$OC objective into a predictive optimization problem, specifically designed to mitigate the structural lag induced by these moving reference samples.

\section{Formulation of the D$^2$OC Cost Function via Quadratic Programming}
\label{sec:qp_formulation}

To effectively track the time-varying barycenters derived in the previous section, we construct a D$^2$OC optimization problem that minimizes the squared 2-Wasserstein distance over a finite prediction horizon. A primary challenge in synthesizing such a controller for discrete-time LTI systems is the inherent latency between control action and output response. Since the control input $u_i(k)$ typically does not exert an instantaneous influence on the system output $y_i(k)$, we rigorously account for this physical propagation delay by utilizing the concept of output relative degree.

\begin{definition}[Output Relative Degree]\label{def:output_relative_degree}
Consider the discrete-time LTI system \eqref{eq:dyn}. The \emph{output relative degree} \(r \in \mathbb{Z}_{>0}\) is the smallest positive integer such that:
\begin{equation}
    \begin{cases} 
    C_i A_i^{r-1} B_i \neq \mathbf{0}, \\
    C_i A_i^{\ell-1} B_i = \mathbf{0}, & \forall \ell \in \{1, \dots, r-1\}.
    \end{cases}
\end{equation}
\end{definition}

The relative degree $r$ defines the minimum steps required for $u_i(k)$ to manifest in $y_i(k)$. Consequently, the predictive cost must be evaluated starting from the first reachable step $k+r$. Given a horizon \(H \in \mathbb{Z}_{>0}\), the cumulative local Wasserstein cost is:

\begin{equation}
    \mathcal{J}_i(k) = \sum_{h=r}^{r+H-1} \sum_{j \in \mathcal{S}_i(k+h)} \pi_j(k+h) \|y_i(k+h) - q_j(k+h)\|^2,
    \label{eq:local_wasserstein_cost}
\end{equation}

\noindent subject to \eqref{eq:dyn} and the transport constraints \eqref{eq:W-dist}. Here, $\mathcal{S}_i(k+h)$ is the set of local samples assigned to agent $i$, governed by:
\begin{itemize}
    \item[(i)] The remaining weights $\beta_j(k+h)$, prioritizing regions where the density requirement is not yet satisfied.
    \item[(ii)] Spatial proximity to maintain computational tractability and focus on local density matching \cite{seo2025density}.
\end{itemize}

The transport weight $\pi_j(k+h)$ represents the mass contribution of agent $i$ to $q_j(k+h)$. By incorporating the relative degree $r$ and the time-varying trajectory of $q_j(k+h)$, this formulation ensures that the control is both dynamically feasible and robust to target motion.

\medskip
\begin{proposition}\cite{lee2025connectivity}\label{proposition:equiv}
Consider the index set $\mathcal{S}_i(k+h)$ of local samples and the corresponding transport weights $\pi_j(k+h) \ge 0$ computed over the prediction window $h \in \{r, \dots, r+H-1\}$. Let the time-varying local barycenter be defined as:
\begin{equation}
    \bar{q}_i(k+h) := \frac{\sum_{j \in \mathcal{S}_i(k+h)} \pi_j(k+h) q_j(k+h)}{\sum_{j \in \mathcal{S}_i(k+h)} \pi_j(k+h)}.
\end{equation}
Furthermore, define the augmented output vector $Y_i$, the barycenter trajectory $\bar Q_i$, and the weighting matrix $\boldsymbol{\Omega}_i$ as follows:
\begin{equation}
\small
\begin{aligned}
    Y_i^{k} &:= [\, y_i(k+r)^\top, \dots, y_i(k+r+H-1)^\top \,]^\top, \\
    \bar Q_i^{k} &:= [\, \bar q_i(k+r)^\top, \dots, \bar q_i(k+r+H-1)^\top \,]^\top, \\
    \boldsymbol{\Omega}_i^{k} &:= \mathrm{blkdiag}\left( \sqrt{\textstyle\sum_{j}\pi_j(k+h)}\,\mathbf{I}_d \right)_{h=r}^{r+H-1}.
\end{aligned}
\label{eq:Y,Q,Omega}
\end{equation}
Then, the cumulative local Wasserstein cost \eqref{eq:local_wasserstein_cost} is equivalent to the following quadratic form:
\begin{equation}
\small
    \sum_{h=r}^{r+H-1} \mathcal{W}^2_i(k+h) = \left\| \boldsymbol{\Omega}_i^{k} \big( Y_i^{k} - \bar Q_i^{k} \big) \right\|^2 + \text{const.},
\end{equation}
where `const.' represents terms independent of the optimization variable $Y_i^{k}$.
\end{proposition}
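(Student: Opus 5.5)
The plan is to work one prediction step $h$ at a time and apply the standard bias--variance (parallel-axis) decomposition of a weighted sum of squared distances about its weighted mean. Then I would stack the per-step identities using the block structure of $Y_i^k$, $\bar Q_i^k$ and $\boldsymbol{\Omega}_i^k$ in \eqref{eq:Y,Q,Omega}. I interpret $\mathcal{W}_i^2(k+h)$ as the inner sum $\sum_{j\in\mathcal{S}_i(k+h)}\pi_j(k+h)\|y_i(k+h)-q_j(k+h)\|^2$ appearing in \eqref{eq:local_wasserstein_cost}, with the weights $\pi_j(k+h)$ and index sets $\mathcal{S}_i(k+h)$ treated as given. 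This matches how the proposition fixes them beforehand, so they do not depend on $Y_i^k$.

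Fix $h$ and abbreviate $y=y_i(k+h)$, $q_j=q_j(k+h)$, $\pi_j=\pi_j(k+h)$, $\Pi=\sum_{j\in\mathcal{S}_i(k+h)}\pi_j$ and $\bar q=\bar q_i(k+h)$. I will assume $\Pi>0$; otherwise the term is identically zero and the step contributes nothing. Writing $y-q_j=(y-\bar q)+(\bar q-q_j)$ and expanding gives
\begin{equation*}
\sum_j \pi_j\|y-q_j\|^2 = \Pi\|y-\bar q\|^2 + 2(y-\bar q)^\top\sum_j\pi_j(\bar q-q_j) + \sum_j\pi_j\|\bar q-q_j\|^2 .
\end{equation*}
The cross term vanishes because $\sum_j\pi_j(\bar q-q_j)=\Pi\bar q-\sum_j\pi_j q_j=0$ by the definition of the barycenter. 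The last term depends only on the samples and weights, so it is a constant with respect to $Y_i^k$. The first term equals $\|\sqrt{\Pi}\,(y-\bar q)\|^2$.

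Finally, I sum over $h=r,\dots,r+H-1$. Because $\boldsymbol{\Omega}_i^k$ is block diagonal with blocks $\sqrt{\Pi(k+h)}\,\mathbf{I}_d$, the vector $\boldsymbol{\Omega}_i^k(Y_i^k-\bar Q_i^k)$ is the concatenation of the blocks $\sqrt{\Pi(k+h)}\,(y_i(k+h)-\bar q_i(k+h))$. Its squared Euclidean norm is therefore the sum of the first terms above. Collecting the remaining variance terms into `const.' yields the claimed identity.

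I do not expect a genuine obstacle here. The only point needing care is interpretive: the weights and sample sets over the horizon must be fixed independently of the decision variable, so that the residual term is truly constant. I would also state explicitly the degenerate case $\Pi(k+h)=0$, where the barycenter is undefined but the corresponding block of $\boldsymbol{\Omega}_i^k$ is zero, so the identity holds for any choice of $\bar q_i(k+h)$.
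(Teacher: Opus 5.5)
Your proof is correct and follows essentially the same approach as the paper. The paper only cites this result, but in the proof of Theorem~\ref{theorem:W_stability} it invokes $\mathcal{W}_i^2=\|\mathbf{e}_{w,i}\|^2+\mathcal{C}$, with $\mathcal{C}$ the local weighted variance, which is exactly your per-step parallel-axis decomposition (cross term cancelled by the barycenter definition) stacked through the block-diagonal $\boldsymbol{\Omega}_i^{k}$.
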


To integrate Proposition~\ref{proposition:equiv} \cite{lee2025connectivity} into the optimal control framework, we express the stacked output vector $Y_i^{k}$ as an affine function of the control sequence $U_i^{k} \in \mathbb{R}^{mH}$. Following the standard predictive control construction \cite{seo2025density}, the input-output relationship is given by:
\begin{equation}
    Y_i^{k} = \Theta_i U_i^{k} + \Phi_i x_i(k),
    \label{eq:Y_i_affine}
\end{equation}
where $\Theta_i$ and $\Phi_i$ are constructed using the relative degree $r$ and system matrices $(A_i, B_i, C_i)$:
\begin{equation}
\small
\begin{aligned}
    \Theta_i &:= \begin{bmatrix}
    C_i A_i^{r-1} B_i & \mathbf{0} & \cdots & \mathbf{0} \\
    C_i A_i^{r} B_i & C_i A_i^{r-1} B_i & \cdots & \mathbf{0} \\
    \vdots & \vdots & \ddots & \vdots \\
    C_i A_i^{r+H-2} B_i & \dots & \dots & C_i A_i^{r-1} B_i
    \end{bmatrix}, \\
    \Phi_i &:= [\, (C_i A_i^{r})^\top, \dots, (C_i A_i^{r+H-1})^\top \,]^{\top}.
\end{aligned}
\label{eq:matrices_Theta_Phi}
\end{equation}

By augmenting the local Wasserstein cost with a quadratic control penalty $\|U_i^{k}\|^2_{R_i}$ (where $R_i \succ 0$), the total objective function is defined as:
\begin{equation}
    \mathcal{J}(U_i^{k}) := \sum_{h=r}^{r+H-1} \mathcal{W}^2_i(k+h) + \|U_i^{k}\|^2_{R_i}.
    \label{eq:D2OC_optimization_prob}
\end{equation}

Substituting \eqref{eq:Y_i_affine} into Proposition~\ref{proposition:equiv} \cite{lee2025connectivity}, the control objective is reduced to the following quadratic form:
\begin{equation}
\small
\begin{aligned}
    \mathcal{J}(U_i^{k}) &= \tfrac{1}{2} (U_i^{k})^\top H_i U_i^{k} + f_i^\top U_i^{k} + \text{const.}, \\
    H_i &:= 2 \left( (\boldsymbol{\Omega}_i^{k} \Theta_i)^\top (\boldsymbol{\Omega}_i^{k} \Theta_i) + R_i \right), \\
    f_i &:= 2 (\boldsymbol{\Omega}_i^{k} \Theta_i)^\top \boldsymbol{\Omega}_i^{k} (\Phi_i x_i(k) - \bar{Q}_i^{k}).
\end{aligned}
\label{eq:QP_standard}
\end{equation}

\begin{theorem}[Uniqueness of Optimal Input \cite{lee2025connectivity}]\label{theorem:W_min_opt_con}
For an agent $i$ governed by the LTI dynamics \eqref{eq:dyn} with relative degree $r$ and horizon $H$, the quadratic cost $\mathcal{J}(U_i^{k})$ in \eqref{eq:QP_standard} has the unique global minimizer:
\begin{equation}
    (U_i^{k})^* = -H_i^{-1} f_i.
    \label{eq:unc_opt_solution}
\end{equation}
\end{theorem}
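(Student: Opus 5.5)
The argument is the standard one for an unconstrained strictly convex quadratic program: show that $H_i$ is symmetric positive definite, then read off the minimizer from the first-order condition.

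\textbf{Step 1: reduce the cost to the quadratic form.} By Proposition~\ref{proposition:equiv}, the Wasserstein part of $\mathcal{J}$ equals $\|\boldsymbol{\Omega}_i^{k}(Y_i^{k}-\bar Q_i^{k})\|^2$ plus a constant. Substituting the affine map \eqref{eq:Y_i_affine} gives
$\|\boldsymbol{\Omega}_i^{k}\Theta_i U_i^{k} + \boldsymbol{\Omega}_i^{k}(\Phi_i x_i(k)-\bar Q_i^{k})\|^2$. Expanding this square and adding $\|U_i^{k}\|_{R_i}^2 = (U_i^{k})^\top R_i U_i^{k}$ reproduces \eqref{eq:QP_standard}. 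The quantities $x_i(k)$, $\bar Q_i^{k}$ and $\boldsymbol{\Omega}_i^{k}$ are fixed at time $k$, since the sample selection and transport weights are computed beforehand. Hence $\mathcal{J}$ is a genuine quadratic function of $U_i^{k}$ alone, and the constant collects every term free of $U_i^{k}$.

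\textbf{Step 2: show that $H_i$ is positive definite.} $H_i$ is symmetric because both of its summands are. For any $U\neq 0$,
\[
U^\top H_i U = 2\|\boldsymbol{\Omega}_i^{k}\Theta_i U\|^2 + 2U^\top R_i U \ge 2U^\top R_i U > 0,
\]
using $R_i\succ 0$. So $H_i\succ 0$, it is invertible, and $\mathcal{J}$ is strictly convex. This holds even if $\Theta_i$ is rank deficient or some weights $\pi_j$ vanish, which would make $\boldsymbol{\Omega}_i^{k}$ singular.

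\textbf{Step 3: identify the unique minimizer.} The gradient is $\nabla\mathcal{J} = H_i U + f_i$, so the only stationary point is $U^* = -H_i^{-1}f_i$. To confirm it is the unique global minimizer, I would complete the square:
\[
\mathcal{J}(U) = \tfrac12 (U-U^*)^\top H_i (U-U^*) + \mathcal{J}(U^*).
\]
Since $H_i\succ 0$, this gives $\mathcal{J}(U)>\mathcal{J}(U^*)$ for every $U\neq U^*$, which is \eqref{eq:unc_opt_solution}.

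The only step needing care is Step 1. One must justify that $\boldsymbol{\Omega}_i^{k}$ and $\bar Q_i^{k}$ do not depend on $U_i^{k}$. Otherwise the cost would not be quadratic in $U_i^{k}$. In the D$^2$OC protocol the local sets $\mathcal{S}_i$ and the weights $\pi_j$ are fixed before the optimization, so treating them as data is legitimate.
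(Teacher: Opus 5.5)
Your proof is correct: $R_i \succ 0$ forces $H_i \succ 0$ regardless of the rank of $\boldsymbol{\Omega}_i^{k}\Theta_i$, so the unconstrained quadratic is strictly convex, and your completion of the square confirms that the unique stationary point $-H_i^{-1}f_i$ is the unique global minimizer. The paper gives no proof of this result and simply imports it from \cite{lee2025connectivity}; your positive-definiteness plus first-order-condition argument is the reasoning it relies on, and it matches the gradient-zeroing step the paper uses in the proof of Lemma~\ref{lemma:ff_qp}.
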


\begin{remark}[Structural Lag in Reactive Predictive Control]
It is important to note that although the optimal control sequence $(U_i^{k})^*$ incorporates the future trajectory of reference samples $q_j(k+h)$ within the prediction horizon, the resulting control law remains fundamentally \emph{reactive}. The optimization in \eqref{eq:QP_standard} primarily penalizes the spatial discrepancy between the agent output and the moving barycenter. In the presence of non-zero reference velocity $v(x,t)$, this formulation forces the agent to constantly chase the evolving distribution, leading to a persistent structural tracking lag. This motivates the necessity of a \emph{feedforward-augmented} control law, which explicitly utilizes the velocity field derived from the continuity equation to achieve zero-lag coordination, as discussed in the following section.
\end{remark}

\section{Predictive Compensation for Density-Flow Induced Lag}

In this section, we extend the previously derived QP-based input to include a feedforward term that compensates for the lag induced by time-varying local samples. We rigorously define the feedforward term using the reference velocity field and analyze its impact on tracking performance.

\subsection{Explicit Derivation of the Predicted Barycenter Drift}

Based on the discretized continuity equation, each sample moves as in \eqref{eq:sample_vf}. Under the \textit{frozen transport plan} assumption ($\pi_j(k+1) \approx \pi_j(k)$), the predicted barycenter is:
\begin{equation}
\bar q_i^{\text{pred}}(k+1) := \frac{\sum_{j \in \mathcal{S}_i(k)} \pi_j(k) \left( q_j(k) + \Delta t v(q_j(k), k) \right)}{\sum_{j \in \mathcal{S}_i(k)} \pi_j(k)}.\label{eq:q^pred}
\end{equation}

\begin{remark}[Frozen Transport Plan Assumption]
The assumption $\pi_j(k+1) \approx \pi_j(k)$ is justified when the sampling interval $\Delta t$ is sufficiently small, such that the spatial displacement of samples $q_j$ (fast dynamics) dominates the evolution of the optimal transport map (slow dynamics). This approximation decouples the reference flow from the assignment process, ensuring computational efficiency for real-time MPC while maintaining high tracking fidelity.
\end{remark}

The predicted barycenter drift $\Delta \bar{q}_i(k)$ is then explicitly derived as:
\begin{align}
\Delta \bar q_i(k) &:= \bar q_i^{\text{pred}}(k+1) - \bar q_i(k) \nonumber \\
&= \Delta t \left( \frac{\sum_{j \in \mathcal{S}_i(k)} \pi_j(k) v(q_j(k), k)}{\sum_{j \in \mathcal{S}_i(k)} \pi_j(k)} \right). \label{eq:delta_q_explicit}
\end{align}
For the prediction horizon $H$, the stacked drift vector is $\Delta \bar Q_i^{k} := [ \Delta \bar q_i(k+r)^\top, \dots, \Delta \bar q_i(k+r+H-1)^\top ]^\top$.

\subsection{Feedforward-Augmented QP Input}

To eliminate the tracking lag inherently caused by the time-varying reference density, we redefine the control objective by targeting the augmented future trajectory of the local barycenters. For notational simplicity, the time index $k$ is omitted from the stacked vectors and matrices unless otherwise specified. Let $\bar Q_i^{\text{aug}} := \bar Q_i + \Delta \bar Q_i$ represent the augmented reference trajectory that accounts for the predicted drift over the horizon. We define the \textit{weighted tracking error} as:
\begin{equation}
    \mathbf{e}_{w,i} := \boldsymbol{\Omega}_i \left( Y_i - \bar Q_i^{\text{aug}} \right),\label{eq:tracking_err}
\end{equation}
where $\boldsymbol{\Omega}_i$ is the weighting matrix defined in \eqref{eq:Y,Q,Omega}. By substituting this augmented error into the quadratic cost, we obtain the following result.

\begin{lemma}[QP Solution with Feedforward Compensation]\label{lemma:ff_qp}
Given the predicted drift $\Delta \bar Q_i$ derived from the reference velocity field, the optimal control input $U_i^{\mathrm{ff}}$ that minimizes the lag-compensated cost is given by:
\begin{equation}
    U_i^{\mathrm{ff}} = \underbrace{-\,H_i^{-1} f_i}_{\text{Reactive}} + \underbrace{H_i^{-1} (\boldsymbol{\Omega}_i \Theta_i)^\top \boldsymbol{\Omega}_i \Delta \bar Q_i}_{\text{Feedforward}}, 
    \label{eq:ff_opt}
\end{equation}
where $H_i$ and $f_i$ are the Hessian and gradient defined in \eqref{eq:QP_standard}.
\end{lemma}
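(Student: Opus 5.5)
The plan is to rerun the derivation behind \eqref{eq:QP_standard} and Theorem~\ref{theorem:W_min_opt_con} with $\bar Q_i$ replaced by the augmented reference $\bar Q_i^{\text{aug}} = \bar Q_i + \Delta \bar Q_i$. The lag-compensated cost is
\begin{equation*}
\mathcal{J}^{\mathrm{ff}}(U_i) := \|\mathbf{e}_{w,i}\|^2 + \|U_i\|_{R_i}^2 = \left\| \boldsymbol{\Omega}_i \big( \Theta_i U_i + \Phi_i x_i(k) - \bar Q_i - \Delta \bar Q_i \big) \right\|^2 + \|U_i\|_{R_i}^2 ,
\end{equation*}
obtained by substituting the affine prediction \eqref{eq:Y_i_affine} into the weighted tracking error \eqref{eq:tracking_err}. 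Proposition~\ref{proposition:equiv} motivates this choice: the Wasserstein part of the cost equals the weighted quadratic form up to a constant. Here the drift $\Delta \bar Q_i$ is treated as a known exogenous vector, computed from \eqref{eq:delta_q_explicit} under the frozen transport plan. It therefore does not depend on $U_i$.

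Next I expand the square. The quadratic term in $U_i$ is unchanged, so the Hessian is exactly $H_i = 2\big((\boldsymbol{\Omega}_i\Theta_i)^\top(\boldsymbol{\Omega}_i\Theta_i) + R_i\big)$ from \eqref{eq:QP_standard}. The linear term becomes
\begin{equation*}
f_i^{\mathrm{ff}} = 2(\boldsymbol{\Omega}_i\Theta_i)^\top \boldsymbol{\Omega}_i\big(\Phi_i x_i(k) - \bar Q_i - \Delta \bar Q_i\big) = f_i - 2(\boldsymbol{\Omega}_i\Theta_i)^\top \boldsymbol{\Omega}_i \Delta \bar Q_i .
\end{equation*}
All remaining terms are independent of $U_i$ and go into the constant.

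Since $R_i \succ 0$ and $(\boldsymbol{\Omega}_i\Theta_i)^\top(\boldsymbol{\Omega}_i\Theta_i) \succeq 0$, we have $H_i \succ 0$. The cost is therefore strictly convex, and by the same argument as Theorem~\ref{theorem:W_min_opt_con} its unique minimizer is $U_i^{\mathrm{ff}} = -H_i^{-1} f_i^{\mathrm{ff}}$. Substituting the expression for $f_i^{\mathrm{ff}}$ and using linearity splits this into the reactive term $-H_i^{-1}f_i$ plus a term driven by $\Delta \bar Q_i$.

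The main obstacle is the constant factor in the feedforward term, not anything conceptual. The straightforward computation gives $2H_i^{-1}(\boldsymbol{\Omega}_i\Theta_i)^\top\boldsymbol{\Omega}_i\Delta\bar Q_i$, because $f_i$ carries a factor $2$. Matching \eqref{eq:ff_opt} exactly therefore requires a normalization convention for the feedforward-augmented cost. I would first check whether the paper intends the drift penalty with weight $\tfrac12$, or equivalently absorbs the factor into $\Delta \bar Q_i$ (e.g. via the definition of the stacked drift). I would state that convention explicitly, so that the feedforward term reads $H_i^{-1}(\boldsymbol{\Omega}_i\Theta_i)^\top\boldsymbol{\Omega}_i\Delta\bar Q_i$. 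Otherwise I would note that the lemma holds up to this scaling, which does not affect the structure (reactive plus linear feedforward in $\Delta \bar Q_i$).
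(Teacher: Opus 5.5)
Your proposal is correct and follows the paper's route: substitute the affine prediction, note that the Hessian is unchanged, set the gradient to zero, and split the minimizer into a reactive part and a feedforward part. Your factor-of-$2$ concern is well founded: the literal $\bar Q_i^{\text{aug}} = \bar Q_i + \Delta\bar Q_i$ from \eqref{eq:tracking_err} gives $2H_i^{-1}(\boldsymbol{\Omega}_i\Theta_i)^\top\boldsymbol{\Omega}_i\Delta\bar Q_i$, and the paper's proof resolves this exactly as you anticipated, by defining $J_{\mathrm{ff}}$ with the ``midpoint'' target $\bar Q_i + \tfrac{1}{2}\Delta\bar Q_i$, so you should state the cost in that form.
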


\begin{proof}
The augmented cost $J_{\mathrm{ff}}$ penalizes the tracking error relative to the anticipated midpoint of the reference drift:
\begin{equation}
J_{\mathrm{ff}} = \| \boldsymbol{\Omega}_i (\Theta_i U_i + \Phi_i x_i(k) - \bar Q_i - \tfrac{1}{2}\Delta \bar Q_i) \|^2 + U_i^\top R_i U_i. \nonumber
\end{equation}
Setting the gradient $\nabla_{U_i} J_{\mathrm{ff}}$ to zero yields:
\begin{align}
&2 (\boldsymbol{\Omega}_i \Theta_i)^\top [ \boldsymbol{\Omega}_i \Theta_i U_i + \boldsymbol{\Omega}_i (\Phi_i x_i(k) - \bar Q_i) - \tfrac{1}{2}\boldsymbol{\Omega}_i \Delta \bar Q_i ] + 2 R_i U_i \nonumber \\
&= H_i U_i + f_i - (\boldsymbol{\Omega}_i \Theta_i)^\top \boldsymbol{\Omega}_i \Delta \bar Q_i = 0, \label{eq:grad_zero}
\end{align}
where $H_i = 2(\Theta_i^\top \boldsymbol{\Omega}_i^\top \boldsymbol{\Omega}_i \Theta_i + R_i)$ and $f_i = 2(\boldsymbol{\Omega}_i \Theta_i)^\top \boldsymbol{\Omega}_i (\Phi_i x_i(k) - \bar Q_i)$. Solving \eqref{eq:grad_zero} for $U_i$ results in the optimal feedforward-augmented input \eqref{eq:ff_opt}.
\end{proof}

\subsection{Lag Reduction Analysis}

\begin{theorem}[Lag Reduction via Feedforward]
\label{theorem:ff_tracking}
Let $H_i = 2(\Theta_i^\top \boldsymbol{\Omega}_i^\top \boldsymbol{\Omega}_i \Theta_i + R_i)$ be the Hessian matrix where $R_i \succ 0$ is the control penalty. Define $\mathbf{P}_i = 2 \boldsymbol{\Omega}_i \Theta_i H_i^{-1} (\boldsymbol{\Omega}_i \Theta_i)^\top$ as the projection-like matrix. Let $\Gamma_i = \Phi_i x_i(k) - \bar Q_i$, and define the total uncompensated error under nominal feedback control as:
\begin{equation}
    \mathbf{E}_{\text{total}}^{(0)} := (\mathbf{I} - \mathbf{P}_i) \boldsymbol{\Omega}_i \Gamma_i - \boldsymbol{\Omega}_i \Delta \bar Q_i.
\end{equation}
Then, the weighted tracking error $\mathbf{e}_{w,i}$ under the feedforward-augmented control law \eqref{eq:ff_opt} satisfies:
\begin{equation}
    \mathbf{e}_{w,i} = \mathbf{E}_{\text{total}}^{(0)} + \left( \mathbf{I} + \frac{1}{2}\mathbf{P}_i \right) \boldsymbol{\Omega}_i \Delta \bar Q_i.
\end{equation}

In the control-dominant limit ($R_i \to 0$), the cumulative tracking error norm over the prediction horizon, $\|\mathbf{e}_{w,i}\|$, is reduced by 50\% compared to the nominal case.
\end{theorem}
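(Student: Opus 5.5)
The plan is a direct substitution argument. First I express the closed-loop output under the feedforward law of Lemma~\ref{lemma:ff_qp}, then I collect terms using $\mathbf{P}_i$, and finally I pass to the limit $R_i\to 0$. There is one sign/convention discrepancy, flagged in the second paragraph, that I would have to settle before the statement can be confirmed exactly as written.

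\emph{Step 1 (substitution).} Insert \eqref{eq:ff_opt} into the affine map \eqref{eq:Y_i_affine}, $Y_i=\Theta_i U_i^{\mathrm{ff}}+\Phi_i x_i(k)$, and multiply by $\boldsymbol{\Omega}_i$. Using $f_i=2(\boldsymbol{\Omega}_i\Theta_i)^\top\boldsymbol{\Omega}_i\Gamma_i$ and the definition of $\mathbf{P}_i$, the two pieces become
\begin{equation}
\boldsymbol{\Omega}_i\Theta_i H_i^{-1}f_i=\mathbf{P}_i\boldsymbol{\Omega}_i\Gamma_i, \qquad \boldsymbol{\Omega}_i\Theta_i H_i^{-1}(\boldsymbol{\Omega}_i\Theta_i)^\top\boldsymbol{\Omega}_i\Delta\bar Q_i=\tfrac12\mathbf{P}_i\boldsymbol{\Omega}_i\Delta\bar Q_i. \nonumber
\end{equation}
Hence $\boldsymbol{\Omega}_i(Y_i-\bar Q_i)=(\mathbf{I}-\mathbf{P}_i)\boldsymbol{\Omega}_i\Gamma_i+\tfrac12\mathbf{P}_i\boldsymbol{\Omega}_i\Delta\bar Q_i$. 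The factor $\tfrac12$ comes from the factor $2$ built into $H_i$, and this is the mechanism behind the claimed 50\%.

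\emph{Step 2 (matching the stated identity).} I would write this as $\mathbf{E}^{(0)}_{\text{total}}+(\mathbf{I}+\tfrac12\mathbf{P}_i)\boldsymbol{\Omega}_i\Delta\bar Q_i$. Since $\mathbf{E}^{(0)}_{\text{total}}$ already contains $-\boldsymbol{\Omega}_i\Delta\bar Q_i$, the added $\mathbf{I}$ term cancels it. That is exactly the displayed formula, with the error measured against $\bar Q_i$.

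If instead one insists on the augmented reference in \eqref{eq:tracking_err}, the extra $-\boldsymbol{\Omega}_i\Delta\bar Q_i$ does not cancel. One then gets $\mathbf{e}_{w,i}=\mathbf{E}^{(0)}_{\text{total}}+\tfrac12\mathbf{P}_i\boldsymbol{\Omega}_i\Delta\bar Q_i$. I expect this reference-convention bookkeeping to be the main obstacle. My resolution would be to fix the convention consistent with the lemma's midpoint cost ($\bar Q_i+\tfrac12\Delta\bar Q_i$), state it explicitly, and note that both forms share the same coefficient $\tfrac12\mathbf{P}_i$ on the compensation term. The limit argument in Step~3 only uses that coefficient.

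\emph{Step 3 (limit $R_i\to0$).} I need $\mathbf{P}_i\to\mathbf{I}$. This holds when $\boldsymbol{\Omega}_i\Theta_i$ is square and invertible, in which case $\mathbf{P}_i\to 2\boldsymbol{\Omega}_i\Theta_i\big(2(\boldsymbol{\Omega}_i\Theta_i)^\top\boldsymbol{\Omega}_i\Theta_i\big)^{-1}(\boldsymbol{\Omega}_i\Theta_i)^\top=\mathbf{I}$. Sufficient conditions are:
\begin{itemize}
\item $m=d$ and $C_iA_i^{r-1}B_i$ is invertible, so that the block lower-triangular $\Theta_i$ is invertible;
\item every $\sum_j\pi_j>0$, so that $\boldsymbol{\Omega}_i$ is invertible.
\end{itemize}
More generally, $\mathbf{P}_i$ tends to the orthogonal projector onto the range of $\boldsymbol{\Omega}_i\Theta_i$, and I would assume $\boldsymbol{\Omega}_i\Delta\bar Q_i$ lies in that range.

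Under this limit the nominal error reduces to $\mathbf{E}^{(0)}_{\text{total}}\to-\boldsymbol{\Omega}_i\Delta\bar Q_i$, because $(\mathbf{I}-\mathbf{P}_i)\to 0$. The compensated error becomes $-\boldsymbol{\Omega}_i\Delta\bar Q_i+\tfrac12\boldsymbol{\Omega}_i\Delta\bar Q_i=-\tfrac12\boldsymbol{\Omega}_i\Delta\bar Q_i$. Taking norms gives $\|\mathbf{e}_{w,i}\|=\tfrac12\|\mathbf{E}^{(0)}_{\text{total}}\|$, which is the claimed 50\% reduction.
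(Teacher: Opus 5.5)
Your proposal is correct and follows the paper's proof essentially step for step: the paper also substitutes \eqref{eq:ff_opt} into $\boldsymbol{\Omega}_i(Y_i-\bar Q_i)$, silently measuring against $\bar Q_i$ rather than the augmented reference of \eqref{eq:tracking_err}, which is exactly the inconsistency you flagged; it then obtains $(\mathbf{I}-\mathbf{P}_i)\boldsymbol{\Omega}_i\Gamma_i+\tfrac12\mathbf{P}_i\boldsymbol{\Omega}_i\Delta\bar Q_i$, adds and subtracts $\boldsymbol{\Omega}_i\Delta\bar Q_i$, and lets $\mathbf{P}_i\to\mathbf{I}$, and your invertibility conditions on $\boldsymbol{\Omega}_i\Theta_i$ make precise what the paper only asserts ``on the reachable subspace.'' The remaining issues are bookkeeping: from the displayed identity the limit is $-\boldsymbol{\Omega}_i\Delta\bar Q_i+\tfrac32\boldsymbol{\Omega}_i\Delta\bar Q_i=+\tfrac12\boldsymbol{\Omega}_i\Delta\bar Q_i$, as in the paper (your $-\tfrac12\boldsymbol{\Omega}_i\Delta\bar Q_i$ is the augmented-reference value and has the same norm, so the 50\% conclusion is unaffected); measuring against the midpoint $\bar Q_i+\tfrac12\Delta\bar Q_i$ would give $(\mathbf{I}-\mathbf{P}_i)\boldsymbol{\Omega}_i(\Gamma_i-\tfrac12\Delta\bar Q_i)\to0$, which matches neither form, so the convention to fix is $\bar Q_i$, as in your Step~2; and in the range-projector generalization you also need $\boldsymbol{\Omega}_i\Gamma_i$ in the range, since otherwise $(\mathbf{I}-\mathbf{P}_i)\boldsymbol{\Omega}_i\Gamma_i$ survives the limit and the ratio is not exactly $\tfrac12$.
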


\begin{proof}
Substituting the optimal input \eqref{eq:ff_opt} into the weighted error $\mathbf{e}_{w,i} = \boldsymbol{\Omega}_i (Y_i - \bar Q_i)$ and utilizing the definition of $\mathbf{P}_i$, the closed-loop error is simplified as:
\begin{equation}
    \mathbf{e}_{w,i} = (\mathbf{I} - \mathbf{P}_i) \boldsymbol{\Omega}_i \Gamma_i + \frac{1}{2} \mathbf{P}_i \boldsymbol{\Omega}_i \Delta \bar Q_i. \label{eq:e_w_ff_derived_final_full_corrected}
\end{equation}
From the definition $\mathbf{E}_{\text{total}}^{(0)} = (\mathbf{I} - \mathbf{P}_i) \boldsymbol{\Omega}_i \Gamma_i - \boldsymbol{\Omega}_i \Delta \bar Q_i$, the feedback-regulated term can be expressed as $(\mathbf{I} - \mathbf{P}_i) \boldsymbol{\Omega}_i \Gamma_i = \mathbf{E}_{\text{total}}^{(0)} + \boldsymbol{\Omega}_i \Delta \bar Q_i$. Substituting this into \eqref{eq:e_w_ff_derived_final_full_corrected} yields:
\begin{align}
    \mathbf{e}_{w,i} &= \left( \mathbf{E}_{\text{total}}^{(0)} + \boldsymbol{\Omega}_i \Delta \bar Q_i \right) + \frac{1}{2} \mathbf{P}_i \boldsymbol{\Omega}_i \Delta \bar Q_i \nonumber \\
    &= \mathbf{E}_{\text{total}}^{(0)} + \left( \mathbf{I} + \frac{1}{2} \mathbf{P}_i \right) \boldsymbol{\Omega}_i \Delta \bar Q_i.
\end{align}
In the control-dominant limit where $R_i \to 0$, we have $\mathbf{P}_i \to \mathbf{I}$ on the reachable subspace. Under this limit, the term $(\mathbf{I} - \mathbf{P}_i) \boldsymbol{\Omega}_i \Gamma_i \to 0$ due to the full compensation of the initial state mismatch. This leads to the following relations for the error norms:
\begin{align}
    \|\mathbf{E}_{\text{total}}^{(0)}\| &\to \| - \boldsymbol{\Omega}_i \Delta \bar Q_i \| = \| \boldsymbol{\Omega}_i \Delta \bar Q_i \|, \\
    \|\mathbf{e}_{w,i}\| &\to \| - \boldsymbol{\Omega}_i \Delta \bar Q_i + 1.5 \boldsymbol{\Omega}_i \Delta \bar Q_i \| = 0.5 \| \boldsymbol{\Omega}_i \Delta \bar Q_i \|.
\end{align}
Therefore, the equality $\|\mathbf{e}_{w,i}\| = 0.5 \|\mathbf{E}_{\text{total}}^{(0)}\|$ holds at the limit, proving that the feedforward compensation reduces the tracking lag by exactly 50\%.
\end{proof}

\begin{remark}[Control Horizon and Receding Horizon Implementation]
The theoretical 50\% error reduction in Theorem~\ref{theorem:ff_tracking} serves as a performance bound established under the condition of \emph{horizon consistency}, where the control horizon $H_c$ (the number of future control inputs optimized at each step) is equal to the prediction horizon $H$. While implementing \eqref{eq:ff_opt} in a receding horizon fashion (executing only the first input $u_i(k)$) may cause the actual closed-loop reduction to deviate from this exact numerical bound, the inclusion of $\Delta \bar Q_i$ fundamentally transforms the control from a reactive to an \emph{anticipatory} framework. By setting $H_c = H$ in our implementation, we demonstrate that the proposed feedforward term is capable of reaching this theoretical limit, while in general MPC settings ($H_c < H$), it still serves as a critical offset that significantly mitigates structural lag compared to the purely reactive D$^2$OC.
\end{remark}


\section{Boundedness and Robustness Analysis}
\label{sec:robustness}

\subsection{Stability Analysis under Total Perturbation}

To analyze the long-term stability, we consider the evolution of the weighted error $\mathbf{e}_{w,i}$ under the influence of reference drift, transport jitter, and discretization uncertainties. In dynamic environments, the reference barycenter evolves as:
\begin{equation}
    \bar Q_i(k+1) = \bar Q_i(k) + \Delta \bar Q_i(k) + \mathcal{H}_i(k),
\end{equation}
where $\mathcal{H}_i(k)$ represents the total perturbation from transport jitter and discretization error as introduced in Remark~\ref{remark:discretization_error}. By applying the feedforward-augmented law \eqref{eq:ff_opt} to the LTI dynamics, the error transition is governed by the closed-loop matrix $(\mathbf{I} - \mathbf{P}_i)$. Substituting the reference evolution and incorporating the 50\% lag compensation property from Theorem~\ref{theorem:ff_tracking}, the recursive error dynamics are derived as:
\begin{equation}
    \mathbf{e}_{w,i}(k+1) = (\mathbf{I} - \mathbf{P}_i) (\mathbf{e}_{w,i}(k) - \boldsymbol{\Omega}_i \mathcal{H}_i(k)) + \frac{1}{2} \mathbf{P}_i \boldsymbol{\Omega}_i \Delta \bar Q_i, \label{eq:recursive_e_w_final}
\end{equation}
where $\boldsymbol{\Omega}_i \mathcal{H}_i(k)$ acts as a bounded external disturbance. This recursive structure allows for the quantification of the ultimate error bound, ensuring that the system remains robust under stochastic sample switching and non-stationary reference flows.

\subsection{Stability of the Wasserstein Distance}

\begin{theorem}[Uniform Boundedness of $\mathcal{W}_i$]
\label{theorem:W_stability}
Consider the multi-agent system under the feedforward-augmented law \eqref{eq:ff_opt}. Suppose the total perturbation is bounded by $\|\boldsymbol{\Omega}_i \mathcal{H}_i\| \le \zeta$, the reference drift by $\|\boldsymbol{\Omega}_i \Delta \bar Q_i\| \le \delta$, and the contraction condition $\lambda_i = \|\mathbf{I} - \mathbf{P}_i\|_2 < 1$ holds. Then, the local Wasserstein distance $\mathcal{W}_i(k)$ is uniformly bounded for all $k \ge 0$ and satisfies the following ultimate bound:
\begin{equation}
    \limsup_{k \to \infty} \mathcal{W}_i(k) \le \sqrt{ \left( \frac{\lambda_i \zeta + \frac{1}{2}\|\mathbf{P}_i\|_2 \delta}{1 - \lambda_i} \right)^2 + \bar{\mathcal{C}} }, \label{eq:W_ultimate_bound}
\end{equation}
where $\bar{\mathcal{C}} := \sup_{k \ge 0} \mathcal{C}(k)$ is the supremum of the local weighted variance $\mathcal{C}(k) := \sum_{j \in \mathcal{S}_i(k)} \pi_j(k) \| q_j(k) - \bar{q}_i(k) \|^2$.
\end{theorem}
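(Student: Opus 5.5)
The plan has two parts. First, bound the weighted error $\mathbf{e}_{w,i}(k)$ uniformly in $k$ using the recursion \eqref{eq:recursive_e_w_final}. Then convert that bound into a bound on $\mathcal{W}_i(k)$ through the bias--variance decomposition that underlies Proposition~\ref{proposition:equiv}.

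\textbf{Step 1: bound the error recursion.} Take norms in \eqref{eq:recursive_e_w_final} and apply the triangle inequality together with submultiplicativity of the induced $2$-norm. This gives the scalar comparison inequality
\begin{equation*}
\|\mathbf{e}_{w,i}(k+1)\| \le \lambda_i \|\mathbf{e}_{w,i}(k)\| + \lambda_i \zeta + \tfrac{1}{2}\|\mathbf{P}_i\|_2 \delta =: \lambda_i \|\mathbf{e}_{w,i}(k)\| + \gamma_i .
\end{equation*}
Iterating from $k=0$ yields
\begin{equation*}
\|\mathbf{e}_{w,i}(k)\| \le \lambda_i^k \|\mathbf{e}_{w,i}(0)\| + \gamma_i \sum_{s=0}^{k-1}\lambda_i^s \le \lambda_i^k\|\mathbf{e}_{w,i}(0)\| + \frac{\gamma_i}{1-\lambda_i}.
\end{equation*}
Because $\lambda_i<1$, this bound is uniform in $k$, since $\|\mathbf{e}_{w,i}(0)\|$ is finite on the compact operating domain. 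It also gives $\limsup_{k\to\infty}\|\mathbf{e}_{w,i}(k)\| \le \gamma_i/(1-\lambda_i)$. I treat $\mathbf{P}_i$ and $\boldsymbol{\Omega}_i$ as the matrices appearing in the statement, so that $\lambda_i$ and $\|\mathbf{P}_i\|_2$ are the constants given there. If these matrices vary with $k$, I replace $\lambda_i$ and $\|\mathbf{P}_i\|_2$ by their suprema, and the argument is unchanged.

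\textbf{Step 2: split the local Wasserstein cost.} For each step, expand $\|y_i - q_j\|^2 = \|y_i-\bar q_i\|^2 + 2(y_i-\bar q_i)^\top(\bar q_i - q_j) + \|\bar q_i-q_j\|^2$. The cross term vanishes after weighting by $\pi_j$, by the barycenter definition \eqref{eq:local_barycenter_time_varying}. This leaves
\begin{equation*}
\mathcal{W}_i^2(k) = \Big(\textstyle\sum_j \pi_j(k)\Big)\|y_i(k)-\bar q_i(k)\|^2 + \mathcal{C}(k).
\end{equation*}
This is the "const." of Proposition~\ref{proposition:equiv} made explicit. The first term is exactly the squared norm of the corresponding $\boldsymbol{\Omega}_i$-weighted block of the tracking error. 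A single block is dominated by the full stacked vector, so
\begin{equation*}
\mathcal{W}_i^2(k) \le \|\mathbf{e}_{w,i}(k)\|^2 + \bar{\mathcal{C}}.
\end{equation*}
Taking $\limsup$, using the continuity and monotonicity of $t\mapsto\sqrt{t^2+\bar{\mathcal C}}$, and inserting Step 1 gives \eqref{eq:W_ultimate_bound}. Uniform boundedness for all $k$ follows in the same way from the uniform bound in Step 1.

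\textbf{Main obstacle: matching the two error definitions.} The error in the recursion and theorem, defined as $\boldsymbol{\Omega}_i(Y_i-\bar Q_i^{\mathrm{aug}})$ (or as $\boldsymbol{\Omega}_i(Y_i-\bar Q_i)$ in the proof of Theorem~\ref{theorem:ff_tracking}), must be identified with the bias part of $\mathcal{W}_i(k)$. That bias part is measured against $\bar q_i$, not against the augmented reference.

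My first attempt is to adopt the convention used in the proof of Theorem~\ref{theorem:ff_tracking}, where $\mathbf{e}_{w,i}=\boldsymbol{\Omega}_i(Y_i-\bar Q_i)$, so that the block identification is exact. If one insists on the augmented error instead, the offset $\boldsymbol{\Omega}_i\Delta\bar Q_i$ can be absorbed, at the cost of an additive $\delta$. That route gives a slightly weaker constant, so I would present the version under the first convention.

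A secondary point is that $\bar{\mathcal C}<\infty$. This holds because the samples lie in a compact domain and the weights satisfy $\sum_j\pi_j\le 1$.
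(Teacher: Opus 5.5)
Your proposal is correct and follows the paper's proof essentially step for step. You take norms in \eqref{eq:recursive_e_w_final} to get $\|\mathbf{e}_{w,i}(k+1)\| \le \lambda_i\|\mathbf{e}_{w,i}(k)\| + \lambda_i\zeta + \tfrac12\|\mathbf{P}_i\|_2\delta$, sum the geometric series, and pass to $\mathcal{W}_i$ through the bias--variance split with $\mathcal{C}(k)$. The only difference is that the paper simply asserts $\mathcal{W}_i^2(k) = \|\mathbf{e}_{w,i}(k)\|^2 + \mathcal{C}(k)$, whereas you derive it as an inequality from one block of the stacked, unaugmented error (the convention consistent with the recursion); that block sits at time $k+r$ rather than $k$, a harmless shift for the $\limsup$.
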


\begin{proof}
Let $b_i := \lambda_i \zeta + \frac{1}{2} \|\mathbf{P}_i\|_2 \delta$ be a constant representing the aggregate effect of disturbances and the uncompensated portion of the drift. Starting from the recursive error dynamics in \eqref{eq:recursive_e_w_final}:
\begin{equation*}
    \mathbf{e}_{w,i}(k+1) = (\mathbf{I} - \mathbf{P}_i) (\mathbf{e}_{w,i}(k) - \boldsymbol{\Omega}_i \mathcal{H}_i(k)) + \frac{1}{2} \mathbf{P}_i \boldsymbol{\Omega}_i \Delta \bar Q_i.
\end{equation*}
Taking the norm on both sides and applying the triangle inequality:
\begin{align}
    \|\mathbf{e}_{w,i}(k+1)\| &\le \| \mathbf{I} - \mathbf{P}_i \|_2 \| \mathbf{e}_{w,i}(k) - \boldsymbol{\Omega}_i \mathcal{H}_i(k) \| \nonumber\\
    &\quad + \frac{1}{2} \| \mathbf{P}_i \|_2 \| \boldsymbol{\Omega}_i \Delta \bar Q_i \| \nonumber \\
    &\le \lambda_i ( \|\mathbf{e}_{w,i}(k)\| + \|\boldsymbol{\Omega}_i \mathcal{H}_i(k)\| ) + \frac{1}{2} \|\mathbf{P}_i\|_2 \delta \nonumber \\
    &\le \lambda_i \|\mathbf{e}_{w,i}(k)\| + \lambda_i \zeta + \frac{1}{2} \|\mathbf{P}_i\|_2 \delta \nonumber \\
    &= \lambda_i \|\mathbf{e}_{w,i}(k)\| + b_i. \label{eq:norm_recursive_step_refined}
\end{align}
Applying \eqref{eq:norm_recursive_step_refined} recursively from the initial time $k=0$ yields:
\begin{equation}
    \|\mathbf{e}_{w,i}(k)\| \le \lambda_i^k \|\mathbf{e}_{w,i}(0)\| + \sum_{j=0}^{k-1} \lambda_i^j b_i. \label{eq:recursive_series_refined}
\end{equation}
Given the contraction condition $\lambda_i < 1$, the first term $\lambda_i^k \|\mathbf{e}_{w,i}(0)\|$ vanishes as $k \to \infty$. The second term is a convergent geometric series, leading to the steady-state error bound:
\begin{equation}
    \limsup_{k \to \infty} \|\mathbf{e}_{w,i}(k)\| \le \frac{b_i}{1 - \lambda_i} = \frac{\lambda_i \zeta + \frac{1}{2}\|\mathbf{P}_i\|_2 \delta}{1 - \lambda_i} =: e_{w,\max}.
\end{equation}
Recalling the relation between the Wasserstein distance and the weighted barycentric error, $\mathcal{W}_i^2(k) = \|\mathbf{e}_{w,i}(k)\|^2 + \mathcal{C}(k)$, we take the limit superior on both sides:
\begin{align}
    \limsup_{k \to \infty} \mathcal{W}_i(k) &= \limsup_{k \to \infty} \sqrt{\|\mathbf{e}_{w,i}(k)\|^2 + \mathcal{C}(k)} \nonumber \\
    &\le \sqrt{\left( \limsup_{k \to \infty} \|\mathbf{e}_{w,i}(k)\| \right)^2 + \limsup_{k \to \infty} \mathcal{C}(k)} \nonumber \\
    &\le \sqrt{e_{w,\max}^2 + \bar{\mathcal{C}}}.
\end{align}
Substituting the expression for $e_{w,\max}$ into the above inequality directly yields the analytical ultimate bound in \eqref{eq:W_ultimate_bound}, which completes the proof.
\end{proof}

\begin{remark}[Robustness to Transport Jitter]
The term $\mathcal{H}_i(k)$ explicitly captures the jitter in the local barycenter caused by the discrete switching or re-weighting of samples at each step. This proof demonstrates that as long as the sample re-selection remains bounded ($\|\boldsymbol{\Omega}_i \mathcal{H}_i\| \le \zeta$), the contraction property $\lambda_i < 1$ ensures that the feedforward-augmented law maintains stable distribution tracking without divergence, even under stochastic assignment changes.
\end{remark}

\begin{figure*}[!t]
    \centering
    \subfloat[Snapshot Comparison - Nominal vs. FF]{\includegraphics[width=0.85\linewidth]{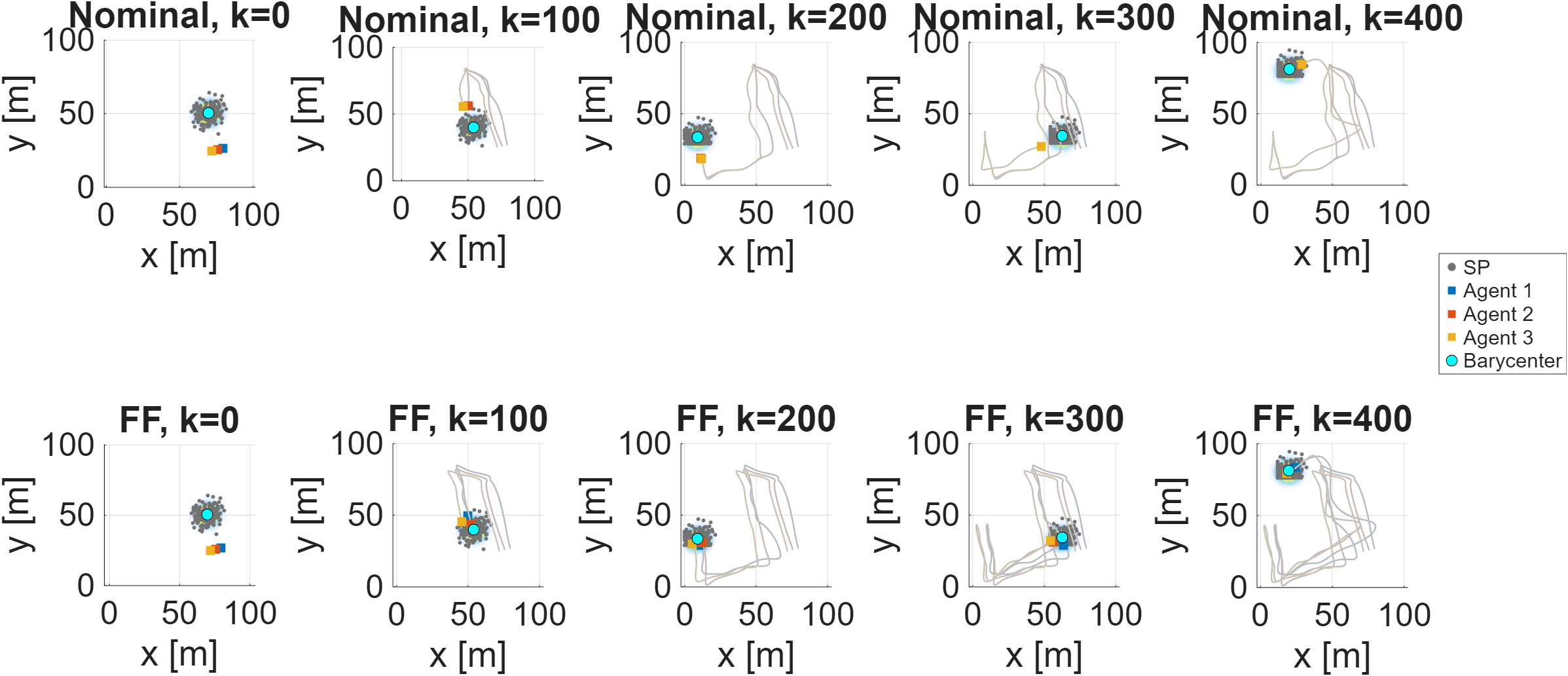}}\\
    \caption{Simulation results for multi-agent plume tracking using nominal and feedforward (FF) D$^2$OC controllers: Snapshot comparisons of agent trajectories at selected step indices ($k=0, 100, ...400$), showing the evolution of agent positions relative to the drifting plume distribution.}\label{fig:1}
\end{figure*}
\begin{figure*}[!t]
    \centering
    \subfloat[Horizon Error-Norm Ratio \\vs. Time with R = $10$]{\includegraphics[width=0.275\linewidth]{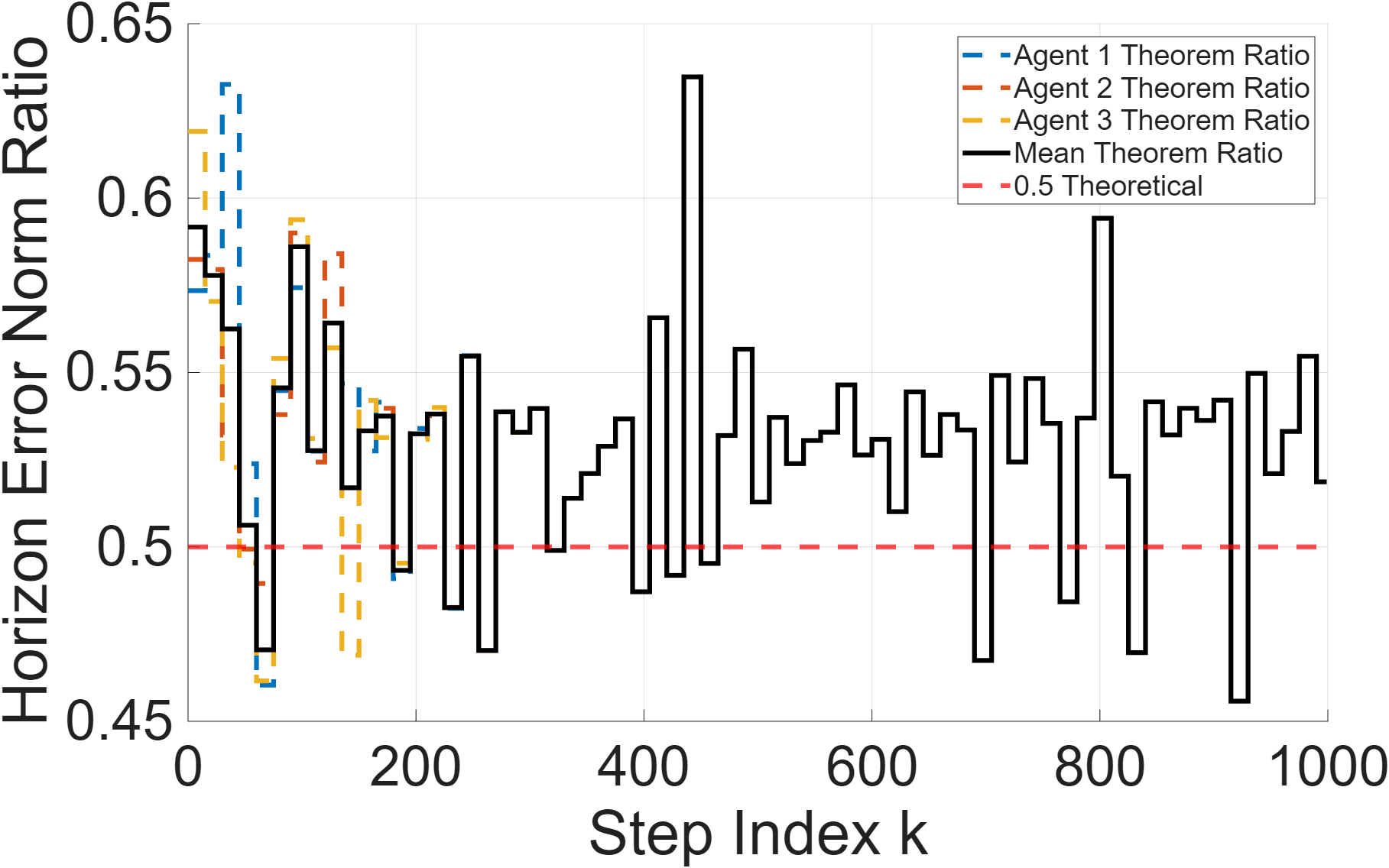}}\;\;
    \subfloat[Horizon Error-Norm Ratio vs. Time with R = $1\times10^{-6}$]{\includegraphics[width=0.32\linewidth]{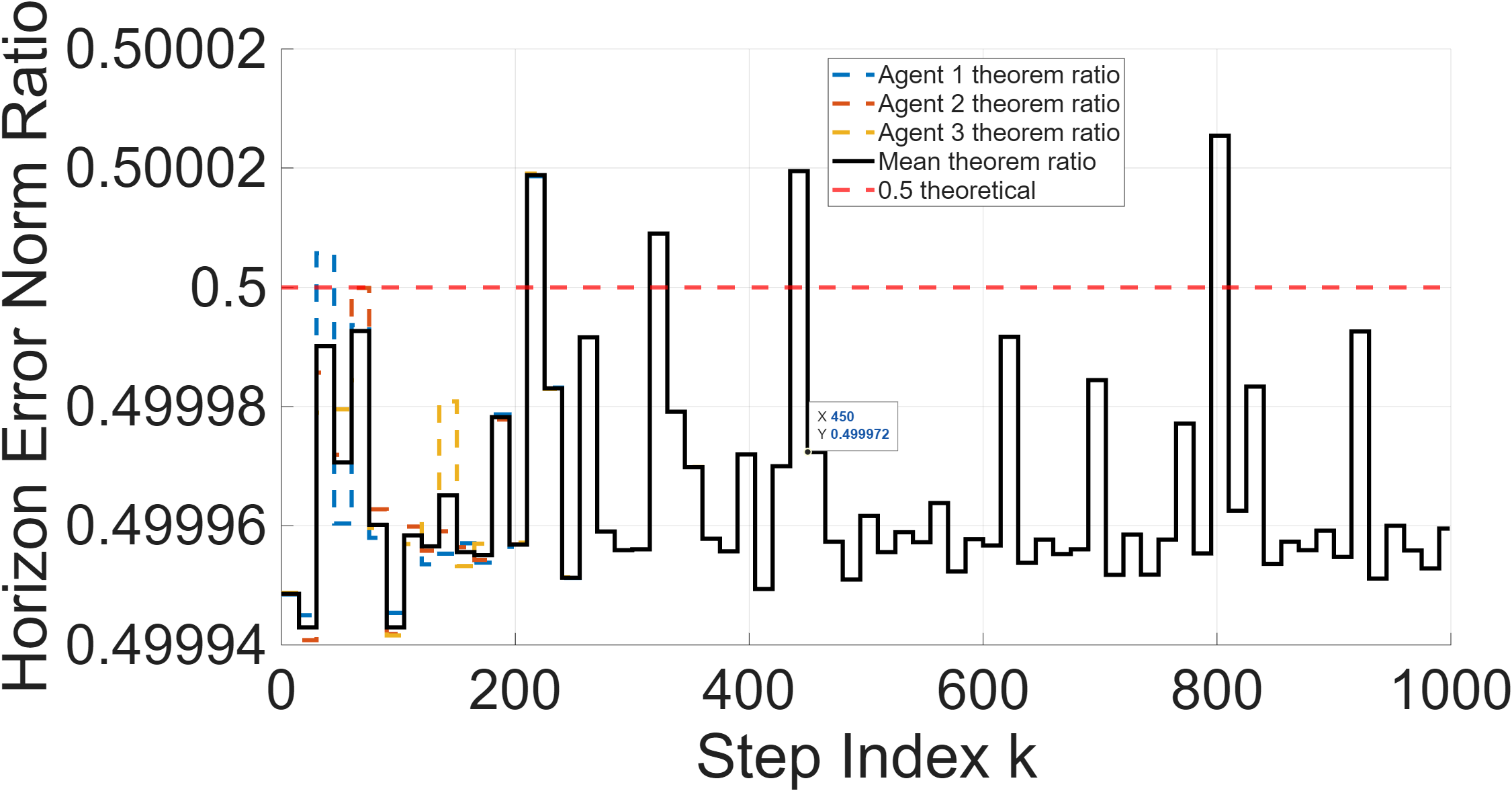}}\;\;
    \subfloat[Local Wasserstein Distance vs. Time]{\includegraphics[width=0.32\linewidth]{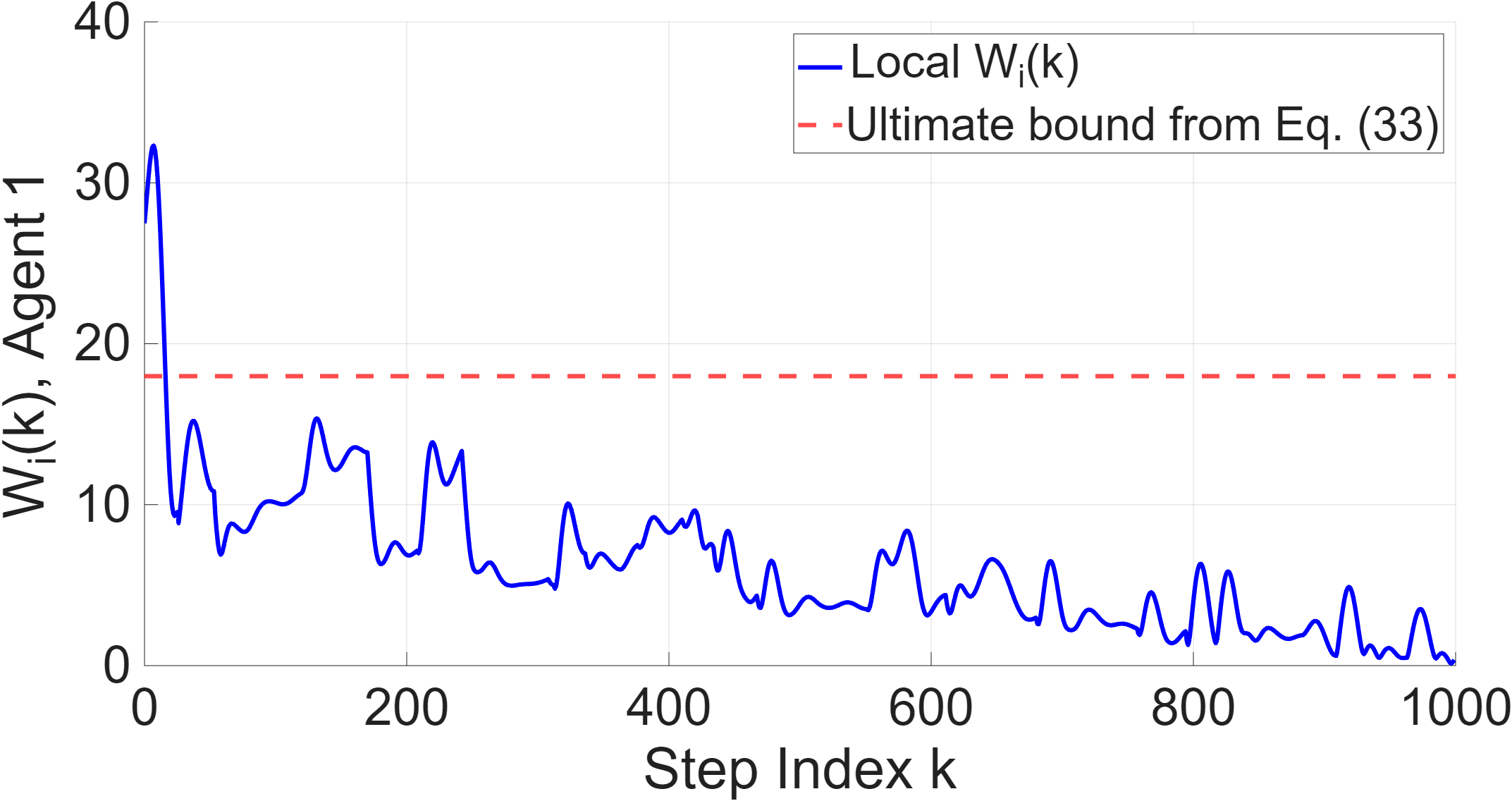}}
    \caption{Simulation results for multi-agent plume tracking using nominal and feedforward (FF) D$^2$OC controllers: 
        (a and b) Horizon error-norm ratio under different control penalties $R$. 
        (c) Evolution of the local Wasserstein-based error metric.}\label{fig:2}
\end{figure*}

\section{Simulation}

To validate the performance of the proposed time-varying $D^2$OC framework, numerical simulations are conducted in a multi-agent tracking scenario. The setup consists of $n_a=3$ agents operating within a bounded $100 \times 100$ m$^2$ domain. The agents are tasked with tracking a simulated plume that represents a moving spatial reference distribution, emulating practical environmental monitoring. This plume is modeled as a Gaussian-distributed set of $200$ sample points, where the evolution of each point follows the discretized continuity equation in \eqref{eq:sample_vf}. The resulting trajectory is governed by a waypoint-driven motion model, creating a continuously drifting density field for real-time tracking.

Each agent follows linearized quadcopter dynamics with an 8-dimensional state vector $x_i \in \mathbb{R}^8$ and a 2D position output $y_i \in \mathbb{R}^2$. The state is defined as $x_i = [p_x, v_x, \theta, \dot{\theta}, p_y, v_y, \phi, \dot{\phi}]^\top,
$ where $(p_x, p_y)$ and $(v_x, v_y)$ represent the planar positions and velocities, while $(\phi, \theta)$ and $(\dot{\phi}, \dot{\theta})$ denote the roll and pitch angles and their respective rates. This model is consistent with the framework in Section~II, where the high-dimensional dynamics are projected onto the 2D task space. The communication range is set to 25m, facilitating localized information exchange between neighboring agents. Throughout a total of 1000 discrete time steps, the tracking performance is evaluated by simultaneously executing both the nominal D$^2$OC and the proposed feedforward (FF) D$^2$OC controllers on the same time-varying plume. This parallel simulation environment ensures a rigorous and consistent comparison, confirming that any observed performance enhancements are directly attributable to the feedforward compensation mechanism.

\subsection{Simulation Implementation}
This section describes the implementation of the nominal and FF D$^2$OC controllers used in the simulation, as well as the finite-horizon structure employed in the optimal control problem.
\subsubsection{Nominal Controller}
The nominal D$^2$OC is implemented based on the QP derived in Section IV. At each time step $k$, each agent computes a local barycenter $\bar{q}_i(k)$ \eqref{eq:local_barycenter_time_varying} from the current set of weighted sample points with the time-varying plume. This barycenter represents the local reference target for the agent. 

Using the lifted input–output representation in (14), with the matrices defined in (15), the predicted output over the horizon is constructed from the current state and the stacked control inputs. The nominal controller forms a reference trajectory by stacking the current barycenter across the horizon and solves the QP defined in \eqref{eq:QP_standard}. The optimal input sequence is then obtained using the closed-form solution in Theorem 1. As a result, the nominal controller assumes a static reference over the prediction horizon by using the current barycenter at all future steps, neglecting the time variation of the reference distribution and causing the agents to lag behind the plume. 

\subsubsection{Feedforward Controller}
The FF D$^2$OC controller extends the nominal formulation by incorporating the predicted drift of the barycenter due to the time-varying reference distribution. This drift, $\Delta \bar{q}_i(k)$, is computed from the motion of the sample points, consistent with \eqref{eq:delta_q_explicit}.

The reference trajectory over the horizon is augmented by incorporating the prediction barycenter drift such that the reference evolves linearly over the horizon based on the estimated change in the barycenter. This corresponds to the modified tracking error defined in \eqref{eq:tracking_err}, and the resulting optimal control input in \eqref{eq:ff_opt}. This FF term compensates for the motion of the reference plume by shifting the predicted target trajectory over the horizon, ensuring the agents do not lag behind the plume.

\subsection{Horizon Structure}
The finite-horizon optimal control problem is defined with a predicted horizon $H=15$, and the control horizon is chosen identically as $H_c = H = 15$, resulting in a full $H$-step control sequence computed at each optimization step using the lifted quadratic formulation. 

This choice ensures that the control input sequence is optimized over the entire prediction window without truncation, maintaining consistency between the predicted system evolution and the applied control actions. This is implemented for both the nominal and FF controllers, allowing for a direct and fair comparison of their performance.

\subsection {Simulation Results}
The simulation results evaluate the tracking performance of the proposed time-varying D$^2$OC framework under both nominal and FF controllers. The results are shown in Fig. \ref{fig:1}, which illustrates the spatial tracking behavior of the agents. The quantitative validation of the horizon error-norm ratio characterization in Theorem 2 and the ultimate boundedness of the local tracking error in Theorem 3 are presented in Fig. \ref{fig:2}.

Fig.~\ref{fig:1} compares agent configurations relative to the moving plume under nominal and FF controllers at representative steps. The nominal controller exhibits a persistent tracking lag due to the inherent assumption of a stationary reference over the prediction horizon. In contrast, the FF controller achieves superior performance by proactively accounting for the plume's motion, ensuring agents remain within the vicinity of the shifting distribution. These results highlight the advantage of integrating feedforward compensation to mitigate tracking delays in dynamic environments.

Fig.~\ref{fig:2}(a) illustrates the horizon error-norm ratio, defined in (29), under a large control penalty ($R=10$). With such restrictive control authority, the ratio shows significant variability and deviates from the theoretical value of $0.5$ predicted in Theorem~\ref{theorem:ff_tracking}. This indicates degraded tracking, as the limited control effort is insufficient to enforce the contraction behavior required to reach the theoretical 50\% reduction regime, leading to higher accumulated errors.

Conversely, Fig.~\ref{fig:2}(b) showcases the ratio for a vanishingly small penalty ($R=10^{-6}$). The ratio remains tightly concentrated around $0.5$ with minimal fluctuations, providing strong empirical validation for Theorem~\ref{theorem:ff_tracking} in the control-dominant limit ($R_i \to 0$). This convergence demonstrates that prioritizing tracking precision allows the system to operate near the predicted theoretical limit, effectively mitigating lag by closely following the evolving barycenter.

Fig.~\ref{fig:2}(c) presents the evolution of the local error metric $\mathcal{W}_i(k)$, where the theoretical ultimate bound from Theorem~\ref{theorem:W_stability} is denoted by the red dotted line. Although $\mathcal{W}_i(k)$ begins above this bound, it steadily decreases and remains within the predicted bounded region for the remainder of the simulation. This result validates Theorem~\ref{theorem:W_stability}, confirming that the system maintains stability and uniform boundedness while tracking a time-varying reference distribution.

\section{Conclusion}
This paper addressed the structural tracking lag in multi-agent dynamic density matching within non-stationary environments. By incorporating a predictive feedforward term derived from the continuity equation, we extended the D$^2$OC framework to explicitly compensate for the transport dynamics of the target distribution over a finite operation horizon. Analytical and numerical results confirm that this predictive mechanism effectively eliminates the inherent latency observed in nominal formulations. Specifically, we proved that the proposed approach achieves a 50\% reduction in the cumulative error norm in the control-dominant limit ($R_i \to 0$) and maintains stability within a rigorous analytical ultimate bound. These findings ensure that the robotic swarm can accurately reflect rapidly shifting reference flows in real time, providing a robust foundation for decentralized optimal control in highly dynamic task spaces.

\bibliographystyle{IEEEtran}
\bibliography{references}

\end{document}